\newtheorem{theorem}{Theorem}
\newtheorem{corollary}{Corollary}
\newtheorem{lemma}{Lemma}
\title{How urban scaling and resource distribution shape social welfare and migration dynamics}
\author{Bryce Morsky\\ \texttt{bmorsky@fsu.edu}}
\affil{Department of Mathematics, Florida State University, Tallahassee, FL, USA}
\date{\today}
\begin{document}

\maketitle

\begin{abstract}
        Many outputs of cities scale in universal ways, including infrastructure, crime, and economic activity. Through a mathematical model, this study investigates the interplay between such scaling laws in human organization and governmental allocations of resources, focusing on impacts to migration patterns and social welfare. We find that if superlinear scaling resources of cities --- such as economic and social activity --- are the primary drivers of city dwellers' utility, then cities tend to converge to similar sizes and social welfare through migration. In contrast, if sublinear scaling resources, such as infrastructure, primarily impact utility, then migration tends to lead to megacities and inequity between large and small cities. These findings have implications for policymakers, economists, and political scientists addressing the challenges of equitable and efficient resource allocation.
\end{abstract}
{\textbf{Keywords:}} migration dynamics, resource allocation, social welfare, urban scaling laws

\section*{Introduction}

Cities exhibit distinct scaling patterns. As they grow, various factors such as infrastructure, individual human needs, and social dynamics scale in non-uniform ways as urbanization reshapes infrastructure demands and economic and social opportunities. This phenomenon, known as urban scaling, has been extensively studied, revealing that different aspects of urban life follow different scaling laws \citep{bettencourt07,bettencourt08,bettencourt13,schlapfer14,west18}. 
Mathematically, this scaling behaviour is captured by the relationship between a quantity $Y$ (e.g,\ infrastructure, economic activity, or social outcomes) and the population size of the city $N$, namely 
\begin{equation}
    Y = Y_0 N^\beta. \label{eq:scaling}
\end{equation}
Here $Y_0$ is a constant baseline and $\beta$ is a scaling exponent that governs how $Y$ changes with city size.

City scaling laws can be classified into three general categories based on the value of the exponent $\beta$, whether it is less than, greater than, or equal to one \citep{bettencourt07}. Sublinear scaling occurs when $\beta<1$. Material infrastructure including roads, utilities, and public services, typically scales sublinearly with population size. Thus, as cities grow, they require proportionally less infrastructure per capita, derived from economies of scale and efficiencies from higher population density. Linear scaling  occurs for $\beta=1$. Some aspects of human needs, such as the provision of basic services (e.g., healthcare, education, and food), tend to scale linearly with population size. For every additional person, a city needs to provide roughly the same amount of these services, resulting in a direct one-to-one relationship. Finally, superlinear scaling ($\beta>1$) is associated with social and economic activities, such as innovation, economic output, crime, and social interactions. Larger cities, due to their density and interconnectedness, generate disproportionately more of these social outputs per capita compared to smaller cities.

These scaling patterns create complex challenges for the fair and efficient allocation of resources, especially between large urban centers and smaller cities or rural areas. For instance, larger cities may have growing demands for resources that cater to superlinear activities such as innovation and economic growth, while smaller cities may require relatively more resources per capita to maintain essential infrastructure. These divergent needs can be highly salient factors in electoral and legislative systems, complicating legislative decisions on resource allocation mechanisms that aim to balance fairness and efficiency across regions of varying sizes. Moreover, since electoral systems and allocation rules may not have been designed to reflect these scaling laws, they may exacerbate inefficiencies and inequities.

The challenges posed by scaling laws are compounded by other factors. Resource allocation by legislative bodies --- whether infrastructure funding, emergency aid (such as COVID-19 or hurricane relief), or community development block grants --- is shaped by social, economic, and political considerations, which may be at odds. In infrastructure funding, for example, a key goal in allotment of resources can be distributive justice \citep{pereira17}, yet such goals may not align with other metrics of fairness or with efficiency. Efficiency based allocations may even exacerbate inequalities. In Turkey, for example, there is a focus on funding already wealthy regions in part due to efficiencies in allocations to these regions \citep{luca15}. In contrast, in the European Union, distributions have been found to be generally aligned with political goals of equity, though some inefficiencies and inequalities persist due to biased allocations favouring in-party voting jurisdictions \citep{dellmuth12}. In practice, such political considerations can heavily affect allocations and lead to inefficiencies and inequities.

Political and electoral influences on resource allocation have been well documented across different democracies \citep{atlas95,finan21,jacques21} with legislators' party affiliations shown to play a significant role in inequitable allocations \citep{larcinese06,sole08,berry10,brollo12,fouirnaies15}. When making allocations, legislators may strategically target districts based on whether they are a party stronghold or a swing district, though this observation is not universal \citep{larcinese13}. In some cases, such as Korea, there is a U-shaped relationship between funding and party affiliation: the president supports jurisdictional strongholds of both his or her own party as well as the opposition party \citep{horiuchi08}. Given the highly varying political incentives and contexts of different, whether a party should target core or swing voters can highly depend on the political and economic situation \citep{dynes20}.

In addition to political and strategic influences, the structure of the electoral system can significantly affect allocations. How legislators are elected --- whether by majoritarian or proportional systems --- can distort representation, thereby influencing allocations across regions. For instance, a higher proportion of representatives to population resulted in higher aid per capita in COVID-19 relief \citep{clemens21}. Beyond any formal faithlessness to representation, informal social networks can further shape allocations by fostering cooperation and coordination among legislators, which can in turn facilitate inequitable resource allocation \citep{jiang20}.

While such political and electoral topics have been well studied, the implications of scaling laws for political economy and democratic decision-making remain underexplored --- particularly in relation to different electoral and legislative systems and with respect to equitable resource allocation. This paper investigates the implications of these scaling laws on the equitable distribution of resources in democratic societies, and evaluates distribution mechanisms in light of these laws. By integrating these scaling laws into utility-based models, we assess their implications for trade-offs between efficiency and equity, as well as their effects on migration patterns and shifting preferences of city dwellers. These insights provide interpretations of migration patterns and can inform legislators addressing the diverse needs of cities and towns within a shared political and economic system.

\section*{Utility, social welfare, and resource allocation}

To model how city dwellers derive utility from the quantity of resources allocated to their cities, we employ utility functions that are functions of the ratios of resources received to those needed, as determined by urban scaling laws. This assumes that the requisite government funding to address certain factors scales as those factors scale for cities. Thus, the utility of city $i$, $U_i$, is a function of $X_{i,j}/Y_{i,j}$, where $X_{i,j}$ represents the resources allocated to a quantity $Y_{i,j} = Y_{0,j} (p_iN)^{\beta_j}$ for city $i$. $N$ is a fixed parameter, the total population across $n$ cities under the government's control, and $p_i \in [0,1]$ is the proportion of this population in city $i$. How $p_i$ may change through migration will be discussed in the next section.

We assume that utility increases with resource allocation, but we also assume that it exhibits diminishing returns. An example such a utility function for individuals in city $i$ is:
    \begin{equation}
        U_i = \sum_{j=1}^m \alpha_j \ln\!\left(1+\frac{X_{i,j}}{Y_{0,j} (p_iN)^{\beta_j}}\right). \label{eq:utility}
    \end{equation}
Here, there are $m$ distinct resources being allocated. These may be funds to build and maintain road networks, water supply, or other infrastructure, in the case that $\beta_j<1$, or funds to manage crime or disease, in the case that $\beta_j>1$. The weights $\alpha_j \geq 0$, with $\sum_{j=1}^m \alpha_j = 1$, reflect the relative impacts of quantity $j$ on an individual's utility. $\alpha_j$ need not be fixed, but could reflect preferences that change as cities grow or shrink. This case will be explored in the next section. Also note that this utility function assumes homogeneity among individuals in a particular city and that cities each have the same functions for each resource.

Given this utility function, we assess the overall social welfare of the total population across all cities using a variety of social welfare functions, which each reflect distinct goals of resource distribution and utility optimization. These welfare functions allow us to evaluate resource divisions given different goals, and have direct policy implications as has been seen in urban planning debates about equity vs.\ efficiency in public goods allocations \citep{glaeser08}. For example, utilitarian social welfare (weighted by population size) maximizes average welfare across all cities:
    \begin{equation}
        W_U = \sum_{i=1}^n p_iU_i. \label{eq:average_welfare}
    \end{equation}
Optimizing this welfare may be the goal of a benevolent central government or stationary bandit that wishes to maximize tax revenue \citep{olson93}. However, maximizing such a welfare function may result in unequal distributions, as it does not inherently address fairness or equity between cities of different sizes. In contrast, a city-weighted average, $W_C = \sum_{i=1}^n U_i/n$, gives equal weight to each city. Going further, Rawlsian social welfare, $W_R = \min(U_1,\ldots U_n)$, focuses on equity, ensuring that the city with the least utility receives priority in resource allocation, thereby addressing the needs of the worst-off city. It is optimized by allocations that equalize utilities across cities. However, since we will explore migrations driven by utility differences, we will only employ Rawlsian social welfare as a benchmark to evaluate other allocation methods.

A balance between the goals of fairness and maximizing tax revenues can be evaluated with the Nash social welfare function:
\begin{equation}
    W_N = \prod_{i=1}^n U_i^{p_i}, \label{eq:Nash_welfare},
\end{equation}
which frequently arises in bargaining settings \citep{nash50,binmore86}. $W_N$ balances efficiency and equity by penalizing low utility in small cities, while still favouring larger populations through the exponent $p_i$. Together, these welfare functions provide a spectrum of allocation philosophies, ranging from pure efficiency (utilitarian) to extreme equity (Rawlsian) and compromise (Nash) in between. Next we consider governments that allocate according to utilitarian or comprise-based principles.

Since allocations to cities can be determined a variety of ways influenced by factors like population size and economic output, it is important to model how the population size of a city maps to the resources allotted it. In a system where resources are allocated directly in proportion to city size --- as in a simple majority voting system --- the allocation is given by $X_{i,j} = \gamma_j p_i$, where $\gamma_j>0$ is a constant representing the total budget for resource $j$. Examples include direct democracy without negotiation, random assemblies \citep{gastil13}, or more generally a legislature governed by a principle of proportionality \citep{brighouse10}. Highly centralized or authoritarian systems can also result in proportional allocations.

Alternatively, resources can be distributed to maximize Nash social welfare. With respect to Nash bargaining, it could arise from a variety of processes and legislative structures. For example, parliamentary systems with proportional representation often lead to coalition governments, which then need to negotiate policies such as resource allocations. Though policies are bargained, the bargaining power of each party may be based on party strength or willingness to fight \citep{harsanyi56}, which is plausibly proportional to the population size. Divided government through bicameral legislatures or a presidential system can also result in this type of bargained solution when there is strong veto power. These examples are far from exhaustive: intergovernmental bargaining between different levels of government, legislative bargaining in parliamentary committees, and consensus democracies are other systems that could also result in a Nash solution. Majoritarian systems, authoritarian, and direct democracies, on the other hand, frequently lead to less bargaining and thus for these scenarios the Nash solution is less relevant.

Up until this point, we have assumed that cities' resources are entirely determined centrally. However, in addition to allocations from a central government, cities can of course self-fund through the economic activity they generate. If such revenue is proportional to the city's size, then implementing a local allocation can be modelled by $X_{i,j} = \tilde{X}_{i,j} + \eta_j p_i N$, where $\tilde{X}_{i,j}$ is the central government's allocation and $\eta_j \geq 0$ is a self-funding parameter for a city. However, this is equivalent to proportional allocations with $\eta_jN$ subsumed into $\gamma_j$. Such endogeneity of resources can also be factored into an allocation rule that aims to optimize Nash social welfare. On the other hand, if the local revenues vary nonlinearly with respect to city size, then such endogeneity could play a more determinative role.

In the next sections, we show how allocation rules and scaling impact migration patterns and how they affect social welfare. And, we explore how city dwellers preferences may evolve in response to these changes.

\section*{Intercity migration}

Here we explore the dynamics of city sizes by assuming that populations are not static but evolve in response to resource allocations. Specifically, the proportions of the population in city $i$, $p_i$, evolve in response to differences between cities' utilities. We model such dynamics using a replicator equation, which is ubiquitous in evolutionary game theory and economics \citep{taylor78,sigmund86,sandholm20,cressman14} and has been previously applied to urban systems, such as road networks \citep{ahmad23}. Its form here is:
    \begin{equation}
        \dot{p}_i = p_i\sum_{j \neq i} p_jr_{i,j}(U_i-U_j). \label{eq:replicator}
    \end{equation}
Individuals thus migrate to cities that have relatively higher utilities for their inhabitants. However, the rate of migration is modulated by the distance between cities, represented by the parameters $r_{i,j} = r_{j,i} > 0$. $r_{i,j}$ measures the closeness or nearness of cities $i$ and $j$, and is smaller the greater the distance between them. For example, $r_{i,j}$ can be defined as the square of the Euclidean distance between cities. Thus, though migration occurs towards cities that have higher utility, the rate of migration is reduced for cities further away. This intuition that distance plays a role in migration has been modeled previously in gravity law migration models \citep{zipf46,karemera00,barbosa18}. This modification to the replicator equation has been explored previously in a different context, homophilic imitation \citep{morsky17}, where it was shown that such modifications generally do not affect the system in the long-run. However, they do have impacts on transitional periods, potentially slowing convergence to equilibria. As we will see, they have a similar effect here.

A key assumption of the above model is that it assumes that (negotiated) allocations are immediately updated and their effects realized as populations change. In reality, there may be a lag in the central government adjusting allocations: elections are periodic and negotiations take time. And even once allocations are adjusted, it may take time for city dwellers to realize their benefits. It takes time to build new infrastructure for example. However, such delays do not affect the stability of equilibria. See Appendix \ref{app:theorems} for the mathematical proofs. As such, though the results presented here assume rapid reallocation of resources in response to evolving populations, they hold true more generally.

We can summarize the analytical findings --- presented in detail in Appendix \ref{app:theorems} --- succinctly as follows. If the weighted average of the scaling exponents is superlinear, then there is a stable equitable equilibrium in which cities are of the same size and individuals receive the same utility. On the other hand, if the weighted average of the scaling exponents is sublinear, then there a single megacity that contains the entire population is a stable equilibrium.

\begin{figure}[htb!]
\centering
\includegraphics[width=0.95\columnwidth]{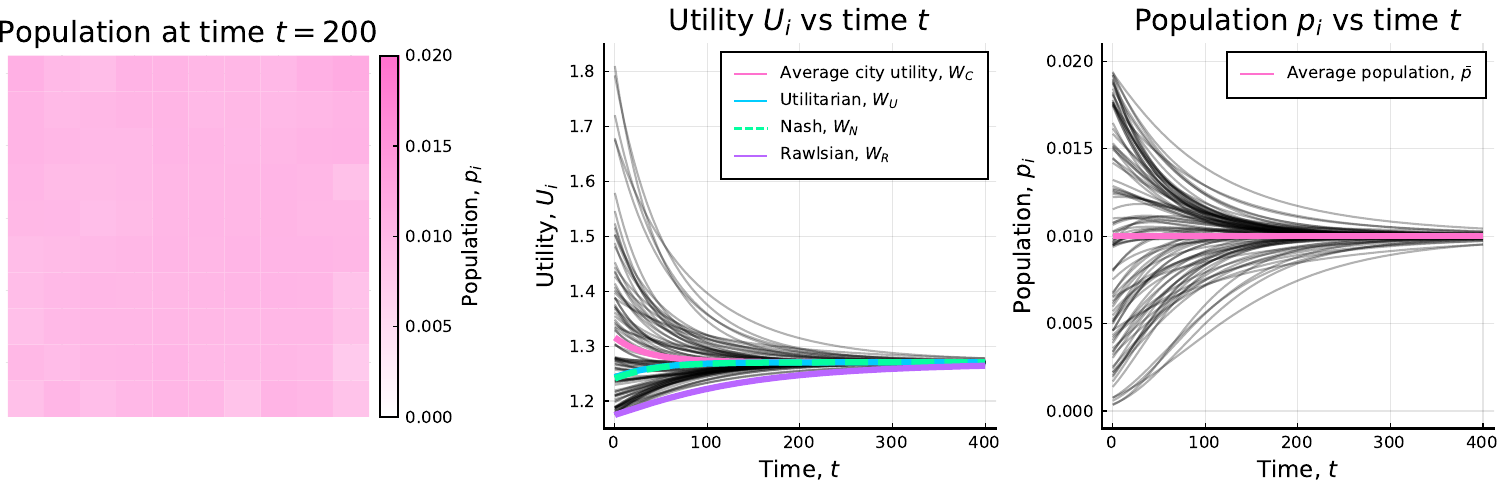}
\caption{Results for scaling exponents $\beta_j=1.1,1.2,1.3$ with $\alpha_j=1/3$. In the long-run, cities converge on the same size.}
\label{fig:superlinear}
\end{figure}

\begin{figure}[htb!]
\centering
\includegraphics[width=0.95\columnwidth]{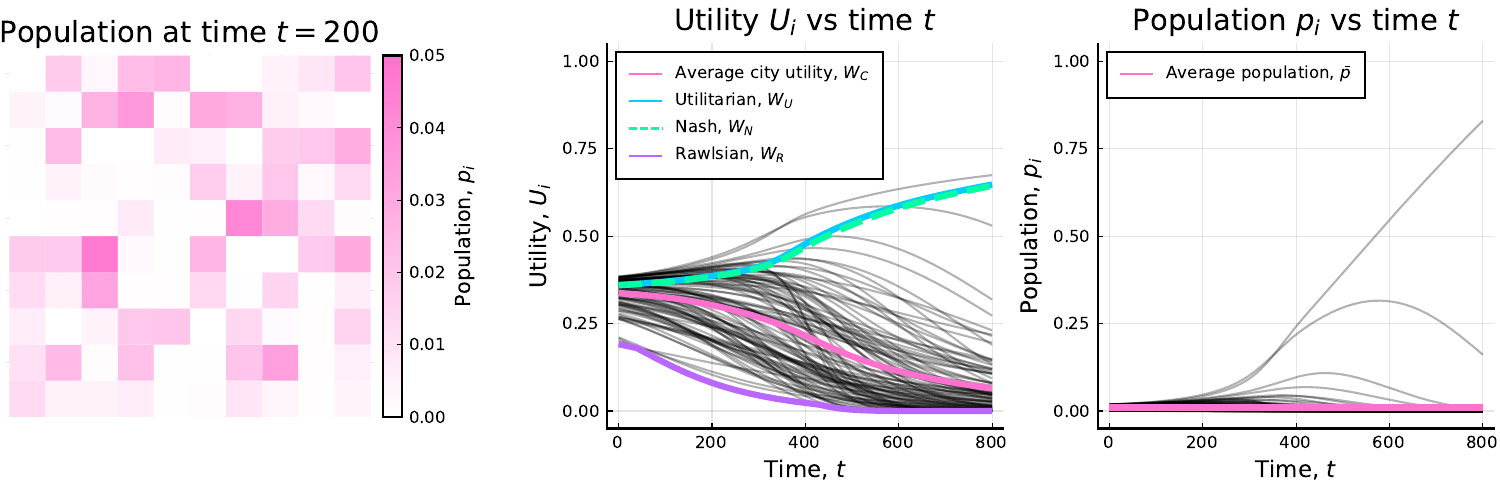}
\caption{Results for scaling exponents $\beta_j=0.7,0.9,1.3$ with $\alpha_j=1/3$. In the long-run, a single megacity dominates. However, multiple cities of varying sizes can coexist during transitional periods.}
\label{fig:sublinear}
\end{figure}

To illustrate the migration dynamics and evolving city utilities, we simulated the system. Figures \ref{fig:superlinear} and \ref{fig:sublinear} depict typical outcomes for proportional allocation of resources when the scaling factors are superlinear and sublinear, respectively. Appendix \ref{app:methods} depicts similar results for allocations that optimize Nash social welfare. These plots include heatmaps representing the spatial distribution of the population at an intermediate time to showcase transitional states of the system. In the superlinear scaling scenario, the system rapidly converges to a homogeneous state where all cities have the same size and provide the same utility. When the scaling exponents are sublinear, however, the dynamics lead to fragmentation: some cities become significantly larger while others shrink or remain small. Due to how distances between cities slows migration, many large cities can coexist in a transitional period. However, slowly, one city will attract all individuals leading ultimately to the formation of a single stable megacity.

\section*{Shifting preferences}

The preceding analyses assumed that preferences for resources --- represented by the weights $\alpha_j$ --- are identical across city size, space, and time. In reality, they may be functions of city size, or vary geographically due to regional differences in culture and economics. City dwellers may value or prioritize different resources, which may be a response to the utility they generate. Shifting preferences could switch the weighted average scaling exponent from superlinear to sublinear, for example, drastically altering the trajectory of a city's population. To study how preferences may shift over time and in turn affect the qualitative outcome of migration dynamics, we apply the framework of adaptive dynamics \citep{diekmann04,brannstrom13}. Consider a two city scenario under proportional allocation of two different resources, and assume that preferences change more slowly than migration dynamics. Without loss of generality, let $\alpha_i$ be the preference of the first resource in city $i$ and assume that $\beta_1 > \beta_2$. We aim to understand the conditions under which $\alpha_i$ for either city makes small incremental shifts over time. 

As is standard in the adaptive dynamics framework, we consider a rare individual in city $i$  with a different preference $\tilde{\alpha}_i \neq \alpha_i$ than the majority of individuals in that city. Since this individual is rare, we assume that the equilibria of the migration dynamics are solely determined by the majority population. We then calculate the difference between the utilities of the rare individual and the majority evaluated at this equilibrium:
\begin{equation}
    S(\tilde{\alpha}_i,\alpha_i) = U_i(\tilde{\alpha}_i,\alpha_i) - U_i(\alpha_i,\alpha_i).
\end{equation}
Taking the derivative of this equation and evaluating it at $\tilde{\alpha}_i=\alpha$ gives us the \emph{selection gradient} $S'(\alpha_i)$, whose sign determines whether or not preferences shift. If $S'(\alpha_i) > 0$, $\alpha_i$ increases, and if $S'(\alpha_i) < 0$, it decreases. Essentially, we assume that preferences shift if they marginally improve the utility of city dwellers.

Under a proportional allocation rule, preferences depend on the term:
\begin{equation}
    \Theta = \frac{\gamma_1Y_{0,2}N^{\beta_2}}{\gamma_2Y_{0,1}N^{\beta_1}},
\end{equation}
which is the relative ratios of allocations received to those needed for each resource (see Appendix \ref{app:theorems} for details). If $\Theta \geq 1$, preferences will always shift to increasingly weight the relatively higher scaling exponent (i.e.\ $\alpha_i$ will increase). Such a scenario can occur if $\gamma_i = Y_{0,i}N^{\beta_i}$. However, if $\gamma_1 < Y_{0,1}N^{\beta_1}$ while $\gamma_2 \geq Y_{0,2}N^{\beta_2}$, then the selection gradient could be negative leading to preferences shifting toward the second (relatively sublinear) resource.

\begin{figure}[htb!]
\captionsetup[subfigure]{justification=centering}
    \centering
    \begin{subfigure}[]{0.425\columnwidth}
        \caption{Bifurcation diagram, $\Theta=0.8$}\label{fig:bif_Theta_0.8}
        \includegraphics[width=\textwidth]{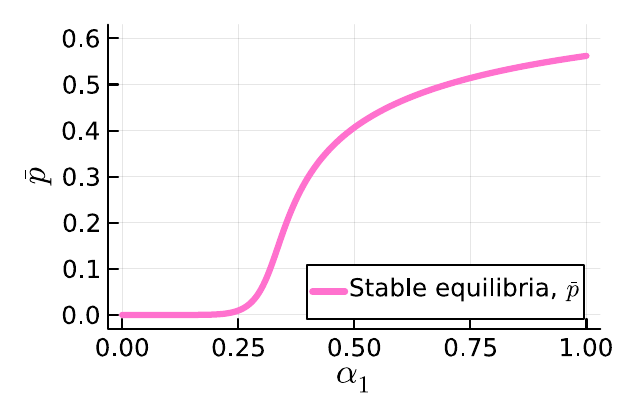}
    \end{subfigure}
    \begin{subfigure}[]{0.425\columnwidth}
        \caption{Invasibility diagram, $\Theta=0.8$}\label{fig:PIP_Theta_0.8}
        \includegraphics[width=\textwidth]{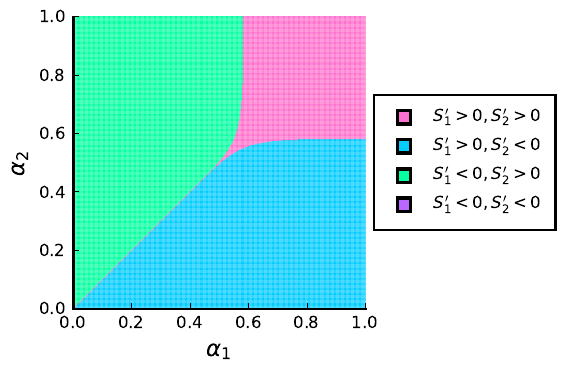}
    \end{subfigure} \\
    \begin{subfigure}[]{0.425\columnwidth}
        \caption{Bifurcation diagram, $\Theta=0.6$}\label{fig:bif_Theta_0.6}
        \includegraphics[width=\textwidth]{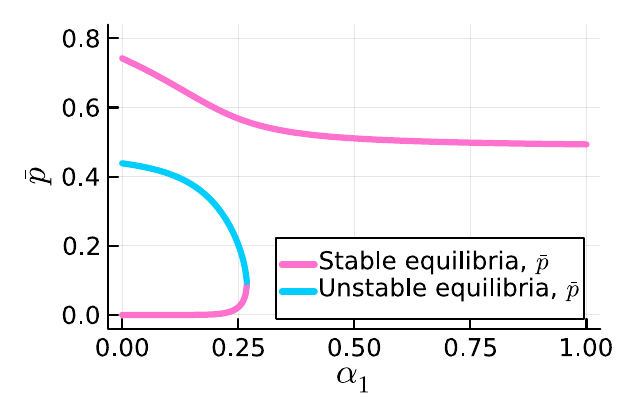}
    \end{subfigure}
    \begin{subfigure}[]{0.425\columnwidth}
        \caption{Invasibility diagram, $\Theta=0.6$}\label{fig:PIP_Theta_0.6}
        \includegraphics[width=\textwidth]{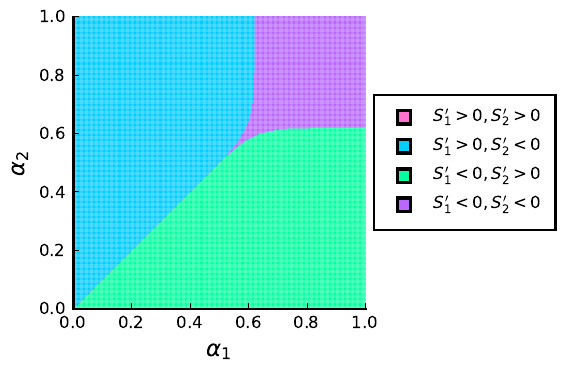}
    \end{subfigure}
    \caption{Bifurcation and invasibility diagrams. $\Theta=0.8$ in the top row and $\Theta=0.6$ in the bottom. The first column depicts bifurcation diagrams detailing the stable and unstable equilibria $\bar{p} \in (0,1)$ for varying values of the preference for the superlinear resource for city $1$, $\alpha_1$, with $\alpha_2$ fixed at $0.7$. To determine the equilibria at which preferences shift in the invasibility plots, the population is initialized at the equilibrium reached from the initial condition $p(0)=0.5$. $\beta_1=1.4$, $\beta_2=0.7$, $\gamma_1 = \Theta Y_{0,1}N^{\beta_1}$, and $\gamma_2 = Y_{0,2}N^{\beta_2}$. The remaining parameter values can be found in Appendix \ref{app:methods}.\\ Panel (a) shows that migrations lead to city $2$ growing into a megacity for sufficiently low $\alpha_1$. However, beyond a threshold, the cities can coexist with the size of city $1$ growing as $\alpha_1$ increases. Panel (c) shows a bistable scenario for sufficiently low $\alpha_1$: either city $2$ is a megacity, or both cities can coexist with city $1$ larger than city $2$. This bistability is lost beyond a threshold $\alpha_1$. When $\Theta=0.8$, preference shifting leads to either one megacity or two equally sized cities as depicted in Panel (b). When $\Theta = 0.6$, however, there is only one attractor of preference evolution, two cities of equal size and utility (Panel (d)).}
    \label{fig:PIP}
\end{figure}

To visualize whether or not preferences shift and their effects on equilibria, we plot bifurcation and invasibility diagrams in Figure \ref{fig:PIP}. The latter diagram depicts the direction of preference shifts by labelling regions according to the signs of the selection gradients for each city. If $\Theta$ is sufficiently large, we observe three attractors of shifting preferences. In one, city $1$ develops into a megacity while city. The preferences of dwellers of the larger city shift to weighting the sublinear resource more than superlinear one. And, preferences of dwellers in the smaller city shift to prefer the superlinear resource. However, in so doing, this drives the growth of a megacity. The third attractor of preference evolution is where dwellers of both cities weight the superlinear resource maximally ($\alpha_1=\alpha_2=1$), which in turn leads to the cities being of equal size. In contrast to these results, for a sufficiently low $\Theta$ (e.g.,\ $\Theta=0.6$), there is a sole attractor of preference evolution in which both cities coexist at equal sizes and there is an intermediate weighting of the resources. Though coexistence of the cities is achieved in this case, overall utility is lower.

\section*{Discussion}

Here we have explored the implications of city scaling laws --- which describe how city outputs scale with population size --- to social welfare and migration dynamics. We examined different allocations rules and how city dwellers preferences may shift in response to funding levels, and we showed that our results are robust to basic assumptions about utility functions and the timing re-allocations of resources. We uncovered that such scaling laws can either exacerbate inequality, driving the growth of megacities, or drive dispersal of the population geographically, conditional on the scaling exponents. Urban bias in policy allocation has historically influenced both democratic and authoritarian regimes. In authoritarian contexts, urban bias can undermine regime stability if policies disproportionately favour urban centres at the expense of rural areas \citep{wallace13}, while in democracies inequitable resource distribution could drive political polarization and undermine democratic norms. Since allocation mechanisms may not adequately account for cities' needs reflected by scaling laws, it is important for policy makers to be aware of their impacts. Budget distribution frameworks could be developed that align with the needs of different cities, particularly through the adoption of welfare functions that consider both efficiency and equity. More broadly, this approach highlights the importance of considering population scale and heterogeneity across cities in the design of democratic institutions with an aim to foster a more stable, inclusive, and representative political landscape.

The results of this study highlight the complex interplay between urban scaling laws, resource distribution, and allocation mechanisms. However, several important avenues for future research remain. For one, due to the theoretical nature of this study, empirical validation is essential. By analyzing historical data on city growth, resource distribution, and political representation, the model parameters and features could be refined. Another critical assumption is within city homogeneity: individuals have the same preferences and reap the same benefits from living within the same city. In reality, cities differ widely in terms of their internal demographics, economic structure, and social priorities. Thus, within city heterogeneity is a key model extension to explore.

While the current model focuses on resource allocation rules, future work could explore legislators' strategic considerations in designing and implementing these rules. Voter targeting, political clientism, vote buying, and city lobbying \citep{cox86,wantchekon03,khemani15,goldstein17} --- which can undermine efficient and equitable resource transfers and allocations --- are such possible strategical extensions to the model. Additionally, infrastructure or other resources that are partially or wholly shared between regions may produce a free-rider effect \citep{besley03}. Roads, in particular, are often public goods that may benefit neighbouring regions. These factors may support the coexistence of smaller cities near larger ones, unlike the current model.

Finally, our results have implications for the urban growth equation \citep{bettencourt07}:
\begin{equation}
    \frac{dN}{dt} = \frac{Y_0}{E}N^\beta - \frac{R}{E}N,
\end{equation}
where $R$ and $E$ are the per-capita quantities required to maintain the population and grow it, respectively. In the case of superlinear scaling, growth is unbounded. As discussed in \cite{bettencourt07}, such growth could lead to economic and social collapse. However, our findings show that superlinear scaling drives migration to redistribute populations, balancing them across many cities. Thus, so long as space and infrastructure allow such dispersal, our results suggest that migration may mitigate such risk of stagnation and collapse.

\subsection*{Code and data availability}
Code to numerically solve the models and plot the results is available at github.com/bmorsky/scaling.

\bibliography{scaling}
\bibliographystyle{apalike}

\appendix

\section{Theorems} \label{app:theorems}

Here we provide mathematical details concerning the stability of equilibria of the Equation \ref{eq:replicator} and the selection gradient. Additionally, we consider an extension of the model to a system of delay differential equation, where there is a delay $\tau>0$ in the realization of reallocations. Thus, allocations are of the form $X_{i,j}(p_i(t-\tau))$ and utility is:
    \begin{equation}
        U_i(t) = U(p_i(t),p_i(t-\tau)) = \sum_{j=1}^m \alpha_j \ln\!\left(1 + \frac{X_{i,j}(p_i(t-\tau))}{Y_{0,j} (p_i(t)N)^{\beta_j}}\right). \label{eqn:utility_lag}
    \end{equation}
Theorems \ref{thm:superlinear} and \ref{thm:sublinear} apply to the base model and Corollaries \ref{thm:superlinear_lag} and \ref{thm:sublinear_lag} apply to the model with delays.

\begin{theorem} \label{thm:superlinear}
    Let $\bar{p}_i = \bar{p}_j = \bar{p} \: \forall i,j$. If the weighted average scaling exponent is superlinear, then the equilibrium is stable. If it is sublinear, then the equililbrium is unstable.
\end{theorem}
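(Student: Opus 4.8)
The plan is to analyze the Jacobian of the replicator system \eqref{eq:replicator} at the symmetric equilibrium $\bar p_i = \bar p = 1/n$ and show that its eigenvalues (restricted to the simplex) have negative real part precisely when the weighted average scaling exponent $\bar\beta := \sum_j \alpha_j \beta_j$ exceeds $1$. First I would substitute the proportional allocation $X_{i,j} = \gamma_j p_i$ (the argument is identical up to constants for the Nash-optimal allocation, which I would remark on) into \eqref{eq:utility}, obtaining $U_i = \sum_j \alpha_j \ln\!\left(1 + \frac{\gamma_j}{Y_{0,j} N^{\beta_j}} p_i^{1-\beta_j}\right)$, so that each $U_i$ depends only on its own $p_i$. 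The key structural fact is that $U_i - U_j = f(p_i) - f(p_j)$ for a single scalar function $f(p) = \sum_j \alpha_j \ln(1 + c_j p^{1-\beta_j})$ with $c_j = \gamma_j/(Y_{0,j}N^{\beta_j}) > 0$; hence at the symmetric point every pairwise difference vanishes and $\bar p$ is indeed an equilibrium.

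Next I would compute $\partial \dot p_i/\partial p_k$ at $\bar p$. Because every factor $p_j r_{i,j}(U_i - U_j)$ in the sum vanishes at $\bar p$, the only surviving terms when differentiating come from hitting the $(U_i - U_j)$ factor, giving
\begin{equation}
\left.\frac{\partial \dot p_i}{\partial p_k}\right|_{\bar p} = \bar p^{\,2}\, f'(\bar p)\,\Bigl( \delta_{ik}\sum_{j\neq i} r_{i,j} - (1-\delta_{ik}) r_{i,k} + \delta_{ik}\, r_{i,i}\cdot 0 \Bigr),
\end{equation}
i.e.\ the Jacobian is $\bar p^{\,2} f'(\bar p)$ times the (weighted) graph Laplacian-type matrix $L$ with $L_{ii} = \sum_{j\neq i} r_{i,j}$ and $L_{ik} = -r_{i,k}$ for $k\neq i$. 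Since the $r_{i,j}>0$, $L$ is a symmetric positive semidefinite matrix whose only zero eigenvalue has eigenvector $\mathbf 1$, which is exactly the direction excluded by the constraint $\sum_i p_i = 1$; on the tangent space of the simplex $L$ is positive definite. Therefore the stability of $\bar p$ is governed entirely by the sign of $f'(\bar p)$: the equilibrium is stable iff $f'(\bar p) < 0$ and unstable iff $f'(\bar p) > 0$.

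Finally I would evaluate the sign of $f'(\bar p)$. Differentiating, $f'(p) = \sum_j \alpha_j \frac{c_j (1-\beta_j) p^{-\beta_j}}{1 + c_j p^{1-\beta_j}}$; at $p = \bar p$ each summand carries the sign of $(1-\beta_j)$, so the overall sign is not immediately a function of $\bar\beta$ alone. The cleaner route, which I expect to be the crux, is to rewrite things in terms of the logarithmic derivative: setting $g(p) = p f'(p) = \sum_j \alpha_j (1-\beta_j)\frac{c_j p^{1-\beta_j}}{1+c_j p^{1-\beta_j}}$, the claim "superlinear weighted average $\Rightarrow$ stable" should follow because each weight $\frac{c_j p^{1-\beta_j}}{1+c_j p^{1-\beta_j}} \in (0,1)$ and one needs $\sum_j \alpha_j(1-\beta_j)(\text{something})<0$; the honest statement is that this holds when $\sum_j \alpha_j(1-\beta_j) \le 0$ i.e.\ $\bar\beta \ge 1$ together with a monotonicity/weighting argument, and conversely $\bar\beta<1$ forces $f'(\bar p)>0$. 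I would make this precise by noting the relevant quantity is the elasticity $\sum_j \alpha_j(1-\beta_j)\,w_j$ with positive weights $w_j$ summing to less than one, and handle the mixed-sign case by observing that the terms with $\beta_j<1$ (which push toward instability) are down-weighted by the saturating factor while... — in fact the safest framing is that Theorem~\ref{thm:superlinear} should be read with the natural normalization (e.g.\ $c_j$ comparable across $j$, as in the $\gamma_j = Y_{0,j}N^{\beta_j}$ regime highlighted in the text), under which $w_j$ is common and $\operatorname{sign} f'(\bar p) = \operatorname{sign}\sum_j \alpha_j(1-\beta_j) = \operatorname{sign}(1-\bar\beta)$, closing the argument; the delay version (Corollary~\ref{thm:superlinear_lag}) then follows from the standard fact that a delay $\tau>0$ in $X_{i,j}(p_i(t-\tau))$ cannot destabilize an equilibrium whose non-delayed linearization already has eigenvalues with negative real part, since the characteristic roots depend continuously on $\tau$ and a crossing of the imaginary axis is precluded by the Laplacian structure. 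The main obstacle is thus pinning down the precise hypothesis on the $c_j$ under which $\operatorname{sign} f'(\bar p)$ reduces exactly to $\operatorname{sign}(1-\bar\beta)$, rather than the computation itself.
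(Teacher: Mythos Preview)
Your linearization is correct and is exactly the paper's: the Jacobian at $\bar p$ factors as a scalar $\Phi=\bar p^{\,2}f'(\bar p)$ times the weighted graph Laplacian $L$ (the paper's matrix $\mathbf{M}$), and since $L$ is symmetric positive semidefinite with kernel spanned by $\mathbf 1$, stability on the simplex is determined entirely by $\operatorname{sign}\Phi$.

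The divergence is in the last step, and it is a misreading of the hypothesis rather than a missing idea. You take ``weighted average scaling exponent'' to mean $\bar\beta=\sum_j\alpha_j\beta_j$, correctly observe that $\operatorname{sign} f'(\bar p)$ does not reduce to $\operatorname{sign}(1-\bar\beta)$ in general, and then try to rescue the statement by imposing a normalization ($c_j$ comparable across $j$). The paper imposes no such normalization. Instead it takes the weights to be the ones that fall out of the computation: with $\kappa_l=\alpha_l/(Y_{0,l}(\bar pN)^{\beta_l})$ and, under proportional allocation, $\tilde\kappa_l=\gamma_l\kappa_l/\sum_k\gamma_k\kappa_k$, the exact condition $\Phi<0$ is \emph{equivalent} to $\sum_l\tilde\kappa_l\beta_l>1$, and \emph{that} is what ``the weighted average is superlinear'' means in the theorem. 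With this reading there is nothing to prove in your final paragraph --- the sign condition is tautologically a weighted-average condition, just not with the $\alpha_j$ as weights. Your elasticity rewriting $g(p)=pf'(p)=\sum_j\alpha_j(1-\beta_j)w_j$ already displays this: normalize the $w_j$ and you are done.

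Two smaller remarks. First, your claim that the Nash-allocation case is ``identical up to constants'' is not how the paper argues it: the paper does not compute the Nash allocation but instead uses concavity of the optimal rule to get $X'_{i,l}<X_{i,l}/\bar p$, which feeds directly into the sign condition \eqref{eq:thm1_condition}. Second, the delay corollary is not handled by a continuity-in-$\tau$ argument; the paper diagonalizes $\mathbf{M}$, reduces to scalar transcendental equations $\lambda=(\Phi_1 e^{-\lambda\tau}-\Phi_2)\mu$, and rules out $\operatorname{Re}\lambda>0$ by a direct modulus estimate.
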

    
\begin{proof}
    First we define the utility $U_i = \sum_{k=1}^m \alpha_k U_{i,k}$ and note the following:
    \begin{equation}
        \frac{\partial}{\partial p_i}(U_i-U_j) = \sum_{k=1}^m \alpha_k \frac{\partial}{\partial p_i}U_
        {i,k}\!\left(\frac{X_{i,k}}{Y_{0,k}(p_iN)^{\beta_k}}\right) = \sum_{k=1}^m \alpha_k \frac{U_{i,k}'}{Y_{0,k}(p_iN)^{\beta_k}}\!\left(X_{i,k}' - \beta_k\frac{X_{i,k}}{p_i}\right),
    \end{equation}
    where $U_{i,k}'$ and $X_{i,k}'$ are the derivatives of $U_{i,k}$ and $X_{i,k}$, respectively. The equilibrium $\bar{p}_i=\bar{p}_j=\bar{p} \: \forall i,j \implies U_i=U_j \: \forall i,j$. Then, linearizing about this equilibrium, the entries of the Jacobian matrix $\mathbf{J} = \{j_{i,j}\}$ with $i\neq j$ are:
    \begin{align}
        j_{i,i} &= \sum_{k=1}^m \bar{p}r_{i,k}\underbrace{(U_i-U_k)}_{0} + \bar{p}^2 r_{i,k}\frac{\partial U_i}{\partial p_i} = \bar{p}^2\bar{r}_i\sum_{l=1}^m \alpha_l \frac{U_{i,l}'}{Y_{0,l}(\bar{p}N)^{\beta_l}}\!\left(X_{i,l}' - \beta_l\frac{X_{i,l}}{\bar{p}}\right), \\
        j_{i,j} &= \bar{p}r_{i,j}\underbrace{(U_i-U_j)}_{0} - \bar{p}^2r_{i,j}\frac{\partial U_j}{\partial p_j} = - \bar{p}^2r_{i,j} \sum_{l=1}^m \alpha_l \frac{U_{j,l}'}{Y_{0,l}(\bar{p}N)^{\beta_l}}\!\left(X_{j,l}' - \beta_l\frac{X_{j,l}}{\bar{p}}\right),
    \end{align}
    where $\bar{r}_i = \sum_{j=1}^n r_{i,j}$. Note that though there may be a different utility function for each resource, we are assuming homogeneity among cities. Thus, $U_{i,l}'=U_{j,l}'$, $X_{i,l}=X_{j,l}$, and $X_{i,l}'=X_{j,l}' \: \forall i,j$. The Jacobian is thus:
    \begin{equation}
        \mathbf{J} = \underbrace{\left(\bar{p}^2\sum_{l=1}^m \alpha_l \frac{U_{i,l}'}{Y_{0,l}(\bar{p}N)^{\beta_l}} \left(X_{i,l}' - \beta_l\frac{X_{i,l}}{\bar{p}}\right)\right)}_{\Phi}
        \underbrace{\begin{bmatrix}
        \bar{r}_1 & -r_{1,2} & \cdots & -r_{1,n} \\
        -r_{1,2} & \bar{r}_2 & \cdots & -r_{2,n} \\
        \vdots & \cdots & \ddots & -r_{n-1,n} \\
        -r_{1,n} & -r_{2,n} & \cdots & \bar{r}_n
        \end{bmatrix}}_{\mathbf{M}},
    \end{equation}
    Note that the matrix $\mathbf{M} = \{m_{ij}\}$ is symmetric, has positive diagonal entries, and is zero-line (and row) sum. Particularly, it is positive semi-definite. Its eigenvalues are thus positive but for a zero eigenvalue corresponding to the constraint $\sum_{i=1}^n p_i=1$. Thus, $\mathbf{J}$ has all negative eigenvalues if $\Phi<0$.
    
    Recall that $U'_{il}>0$ by the assumption that utility is increasing with respect to resources allocated to the city. Thus, in the case where utility functions are the same across resource ($U_i=U_{ij}=U_{ik}$), $\Phi<0$ simplifies to:
    \begin{equation}
        \sum_{l=1}^m \kappa_l\left(X_{i,l}' - \beta_l\frac{X_{i,l}}{\bar{p}}\right) < 0, \label{eq:thm1_condition}
    \end{equation}
    where $\kappa_l = \alpha_l/(Y_{0,l}(\bar{p}N)^{\beta_l})$. In the scenario of proportional distribution ($X_{i,l} = \gamma_l\bar{p}$), this condition simplifies further to $\sum_{l=1}^m \tilde{\kappa}_l \beta_l > 1$ where $\tilde{\kappa}_l = \gamma_l\kappa_l/(\sum_{l=1}^m \gamma_l\kappa_l)$ (i.e\ the weighted average is superlinear). Further, since there are diminishing returns to social welfare under the Nash social welfare function, the optimal allocation rule is a concave function. Thus, $X_{i,l}'<X_{i,l}/\bar{p}$. And therefore, this condition will also hold for legislatures that allocate by the Nash rule.
\end{proof}

\begin{corollary} \label{thm:superlinear_lag}
    Let $\bar{p}_i = \bar{p}_j \: \forall i, j$. Theorem \ref{thm:superlinear} holds for the corresponding system of delay differential equations where utility is defined by Equation \ref{eqn:utility_lag}.
\end{corollary}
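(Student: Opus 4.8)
The plan is to reduce the delay-differential system to the same linear-stability computation as in Theorem \ref{thm:superlinear}, and then to exploit a structural feature of the characteristic equation that forces the delay to be harmless. First I would linearize the delayed replicator system about the symmetric equilibrium $\bar p_i = \bar p$. Writing $p_i(t) = \bar p + \varepsilon_i(t)$, the utility $U_i(t) = U(p_i(t), p_i(t-\tau))$ contributes two partial derivatives, one with respect to the ``instantaneous'' argument $p_i(t)$ (through the denominator $Y_{0,j}(p_i(t)N)^{\beta_j}$) and one with respect to the ``delayed'' argument $p_i(t-\tau)$ (through the allocation $X_{i,j}(p_i(t-\tau))$). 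Because the whole right-hand side of \eqref{eq:replicator} carries a prefactor $p_i\sum_{j\neq i}p_j r_{i,j}(U_i - U_j)$ that vanishes at the equilibrium (since $U_i = U_j$ there), the linearization again only picks up terms where the derivative hits the $(U_i - U_j)$ factor. This yields a linear delay system of the form $\dot{\bm\varepsilon}(t) = \mathbf{A}\,\bm\varepsilon(t) + \mathbf{B}\,\bm\varepsilon(t-\tau)$, where $\mathbf{A} = \Phi_{\mathrm{inst}}\,\mathbf{M}$ comes from differentiating the denominators and $\mathbf{B} = \Phi_{\mathrm{del}}\,\mathbf{M}$ comes from differentiating the allocations, with $\mathbf{M}$ the same symmetric, positive-semidefinite, zero-row-sum matrix as in Theorem \ref{thm:superlinear} and $\Phi_{\mathrm{inst}} + \Phi_{\mathrm{del}} = \Phi < 0$ the scalar from that proof.

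The key observation is that $\mathbf{A}$ and $\mathbf{B}$ are simultaneously diagonalized by the eigenvectors of $\mathbf{M}$. On the eigenspace for eigenvalue $\mu > 0$ of $\mathbf{M}$ (the zero eigenvalue corresponds to the constraint $\sum_i p_i = 1$ and is irrelevant), the characteristic equation decouples into a scalar transcendental equation
\begin{equation}
    \lambda = \mu\bigl(\Phi_{\mathrm{inst}} + \Phi_{\mathrm{del}}\,e^{-\lambda\tau}\bigr). \label{eq:char_lag}
\end{equation}
I would then show that every root $\lambda$ of \eqref{eq:char_lag} has negative real part for all $\tau \geq 0$. The cleanest route is a sign/monotonicity argument: one checks the signs of $\Phi_{\mathrm{inst}}$ and $\Phi_{\mathrm{del}}$ individually. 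Differentiating the denominator term $(p_iN)^{-\beta_k}$ gives a negative contribution proportional to $-\beta_k$, so $\Phi_{\mathrm{inst}} < 0$; differentiating the allocation $X_{i,k}(p_i(t-\tau))$ gives a contribution proportional to $U_{i,k}' X_{i,k}' > 0$ for any increasing allocation rule, so $\Phi_{\mathrm{del}} > 0$ (for proportional allocation $X_{i,k} = \gamma_k p_i$ this is $\gamma_k$; for the concave Nash rule it is $X_{i,k}'$). With $\Phi_{\mathrm{inst}} < 0$, $\Phi_{\mathrm{del}} > 0$, and $\Phi_{\mathrm{inst}} + \Phi_{\mathrm{del}} < 0$, equation \eqref{eq:char_lag} is exactly the characteristic equation of a scalar delayed negative-feedback system whose undelayed part is already stabilizing and whose delayed part has \emph{smaller} magnitude than the instantaneous one; a standard argument (assume $\mathrm{Re}\,\lambda \geq 0$, take moduli on both sides, and derive $|\lambda| \leq \mu(|\Phi_{\mathrm{inst}}| + |\Phi_{\mathrm{del}}|)$ together with a contradiction from the real part using $|\Phi_{\mathrm{del}}\,e^{-\lambda\tau}| \leq \Phi_{\mathrm{del}} < |\Phi_{\mathrm{inst}}|$) rules out any root in the closed right half-plane. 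Conversely, in the sublinear case $\Phi > 0$, setting $\lambda = 0$ in \eqref{eq:char_lag} shows $\lambda$ can be made positive, so instability persists, matching the second half of Theorem \ref{thm:superlinear}.

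The main obstacle I anticipate is the delay-equation root-location step, equation \eqref{eq:char_lag}: unlike the ODE case, one cannot simply read off eigenvalue signs, and one must verify that no roots cross the imaginary axis as $\tau$ increases from $0$. The saving grace is the inequality $\Phi_{\mathrm{del}} < |\Phi_{\mathrm{inst}}|$, which holds precisely because the denominator's $\beta_k$-weighted derivative dominates the allocation's derivative under the same condition \eqref{eq:thm1_condition} that made $\Phi < 0$ — so the delayed feedback is never strong enough to destabilize. I would make this rigorous either by the modulus argument sketched above or by invoking the standard result that $\lambda = a + b e^{-\lambda\tau}$ with $a < 0$ and $|b| < |a|$ (equivalently $a + |b| < 0$) has all roots in the open left half-plane for every $\tau \geq 0$. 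Everything else — the linearization, the simultaneous diagonalization via $\mathbf{M}$, and the reduction to the scalar equation — parallels the proof of Theorem \ref{thm:superlinear} verbatim.
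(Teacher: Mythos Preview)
Your proposal is correct and follows essentially the same route as the paper: linearize the delayed system at the symmetric equilibrium, observe that both the instantaneous and delayed parts are scalar multiples of the same positive-semidefinite matrix $\mathbf{M}$, diagonalize to obtain scalar characteristic equations $\lambda = \mu(\Phi_{\mathrm{inst}} + \Phi_{\mathrm{del}}\,e^{-\lambda\tau})$, and then rule out roots with nonnegative real part via a modulus argument using $|\Phi_{\mathrm{del}}| < |\Phi_{\mathrm{inst}}|$ (the paper writes this as $\Phi_1 < \Phi_2$ and rearranges to $\lambda/\mu + \Phi_2 = \Phi_1 e^{-\lambda\tau}$ before taking moduli). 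Your identification of the signs $\Phi_{\mathrm{inst}}<0$, $\Phi_{\mathrm{del}}>0$ and the key inequality matches the paper exactly.
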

\begin{proof}
    Linearizing with respect to $p(t)$ and $p(t-\tau)$, we have the following characteristic equation:
    \begin{equation}
        \det(\lambda\mathbf{I} - (\Phi_1e^{-\lambda_i\tau} - \Phi_2)\mathbf{M}) = 0,
    \end{equation}
    with
    \begin{align}
        \Phi_1 &= \bar{p}^2\sum_{l=1}^m \alpha_l \frac{U_{i,l}'}{Y_{0,l}(\bar{p}N)^{\beta_l}}X_{i,l}' > 0, \\
        \Phi_2 &= \bar{p}^2\sum_{l=1}^m \alpha_l \frac{U_{i,l}'}{Y_{0,l}(\bar{p}N)^{\beta_l}}\beta_l\frac{X_{i,l}}{\bar{p}} > 0.
    \end{align}
    Note that $\Phi_1 < \Phi_2$ is the condition for stability in Theorem \ref{thm:superlinear}. Since $\mathbf{M}$ is a real symmetric matrix, we may diagonalize it with invertible matrix $\mathbf{P}$ such that $\mathbf{M}=\mathbf{P}\mathbf{D}\mathbf{P}^{-1}$. We may then rewrite the characteristic equation:
    \begin{equation}
        \det(\lambda\mathbf{I} - (\Phi_1e^{-\lambda_i\tau} - \Phi_2)\mathbf{M}) = \det(\lambda \mathbf{I} - (\Phi_1e^{-\lambda_i\tau} - \Phi_2)\mathbf{D}) = 0.
    \end{equation}
    We thus have $n$ linear equations of the form $\lambda_i = (\Phi_1e^{-\lambda_i\tau} - \Phi_2)\mu_i$, where $\mu_i \geq 0$ are the diagonal entries of $\mathbf{D}$. Letting $\lambda_i = a + bi$ have positive real part and taking the modulus of both sides of the equation $\lambda_i/\mu_i + \Phi_2 = \Phi_1e^{-\lambda_i\tau}$, we have:
    \begin{equation}
        \Phi_2 < \sqrt{(a/\mu_i+\Phi_2)^2+(b/\mu_i)^2} = \Phi_1e^{-a\tau} < \Phi_1.
    \end{equation}
    However, $\Phi_2 < \Phi_1$ is a contradiction.
\end{proof}

\begin{theorem} \label{thm:sublinear}
    Let $\bar{p}_i = 1$ for some $i$, and thus $\bar{p}_j=0 \: \forall i \neq j$. If the weighted average scaling exponent is sublinear, then each $\bar{p}_i=1$ is a stable equilibrium.  If it is sublinear, then the equilibria are all unstable.
\end{theorem}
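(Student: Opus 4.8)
The plan is to mirror the strategy used for Theorem~\ref{thm:superlinear}: pass to a local chart on the simplex, linearize Equation~\ref{eq:replicator} at the megacity vertex, and read off stability from the spectrum. Fix the vertex $\bar p_i=1$; by the homogeneity of cities assumed throughout, all $n$ vertices are equivalent, so treating one suffices. Use the independent coordinates $p_j$, $j\neq i$, with $p_i=1-\sum_{j\neq i}p_j$. Because every component of the vector field carries the prefactor $p_j$, i.e.\ $\dot p_j=p_j\sum_{k\neq j}p_k r_{j,k}(U_j-U_k)$, differentiating at the vertex kills the $p_j\,\partial_\ell(\cdots)$ contributions and kills every term of the inner sum except $k=i$, whose weight $p_i$ equals $1$ there. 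Hence the Jacobian of the reduced system is, to leading order, diagonal with $(j,j)$ entry $r_{i,j}\bigl(U_j-U_i\bigr)$ evaluated at the vertex --- the replicator invasion fitness of a vanishing city~$j$ against the megacity.

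The second step is to evaluate that invasion fitness and tie its sign to the scaling exponents. The megacity term $U_i$ is evaluated at $p_i=1$ and is a finite constant $U_i^\star=\sum_k\alpha_k\ln\!\bigl(1+X_{i,k}/(Y_{0,k}N^{\beta_k})\bigr)$. The delicate object is $U_j$ as $p_j\to0^+$: under proportional allocation $X_{j,k}=\gamma_k p_j$ the argument of the $k$-th logarithm is $(\gamma_k/(Y_{0,k}N^{\beta_k}))\,p_j^{\,1-\beta_k}$, which vanishes for sublinear resources and diverges for superlinear ones. Rather than rely on a naive value of $U_j$ at $p_j=0$, I would extract the dominant behaviour of $\dot p_j$ as $p_j\to0^+$: when the weighted average exponent is sublinear the bounded, vanishing sublinear contributions win, so $U_j-U_i^\star<0$ and $\dot p_j<0$ along rays into the simplex, whereas when it is superlinear the divergent superlinear contributions win, $U_j-U_i^\star>0$, and $\dot p_j>0$. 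As in Theorem~\ref{thm:superlinear}, the Nash-optimal rule is concave, so $X_{j,k}'<X_{j,k}/p_j$ and the same sign comparison carries over.

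The third step assembles the conclusion: the cities $j\neq i$ enter symmetrically, so all diagonal Jacobian entries share a common sign; the reduced linearization is negative definite when the weighted average is sublinear --- the vertex is asymptotically stable --- and has positive entries when it is superlinear --- the vertex is a repeller, hence unstable, complementing the stability of the equitable equilibrium established in Theorem~\ref{thm:superlinear}.

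The main obstacle is precisely the non-smoothness of the vector field at the vertex in the presence of superlinear resources: since $p_j^{\,1-\beta_k}$ (equivalently a $p_j\ln p_j$-type term once it passes through the logarithm) is not differentiable at $p_j=0$, the equilibrium need not be hyperbolic and the classical linearization theorem does not apply verbatim. The remedy is to argue stability or instability directly from the sign of $\dot p_j$ on a simplex neighbourhood of the vertex --- effectively a one-dimensional comparison along each edge $p_j\in(0,\varepsilon)$ --- and to verify carefully that the dominant fractional-power term controls that sign. This is the step demanding the most care; the remainder is the bookkeeping already rehearsed in the proof of Theorem~\ref{thm:superlinear}.
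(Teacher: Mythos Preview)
Your route diverges from the paper's in a substantive way. The paper never computes the invasion fitness $U_j-U_i$ of a vanishing city; it simply requires that the \emph{megacity's} utility $U_i$ be increasing in its own share near $\bar p_i=1$, computes
\[
\left.\frac{\partial U_i}{\partial p_i}\right|_{p_i=1}=\sum_{l}\kappa_l\bigl(X_{i,l}'-\beta_l X_{i,l}\bigr),\qquad \kappa_l=\frac{\alpha_l}{Y_{0,l}N^{\beta_l}},
\]
and reads off the weighted-average condition $\sum_l\tilde\kappa_l\beta_l<1$ from its sign. Because $U_i$ is smooth at $p_i=1$, no singularity arises and the criterion falls out in one line.

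Your invasion argument is the more orthodox replicator calculation, and you correctly observe that the reduced Jacobian is formally diagonal with entries $r_{i,j}(U_j-U_i)$ and that these entries are ill-defined at the vertex when superlinear resources are present. The gap is in your proposed remedy. You assert that ``when the weighted average exponent is sublinear the bounded, vanishing sublinear contributions win, so $U_j-U_i^\star<0$'', but for the utility of Equation~\ref{eq:utility} this is false: each superlinear term contributes
\[
\alpha_k\ln\!\Bigl(1+\tfrac{\gamma_k}{Y_{0,k}N^{\beta_k}}\,p_j^{\,1-\beta_k}\Bigr)\ \sim\ \alpha_k(\beta_k-1)\lvert\ln p_j\rvert\ \to\ +\infty\quad\text{as }p_j\to0^+,
\]
while every sublinear term tends to $0$. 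A single superlinear resource with positive weight therefore forces $U_j\to+\infty$ and hence $\dot p_j>0$ near the vertex, irrespective of any weighted average. Your sign analysis thus does not reproduce the paper's criterion; taken literally it yields ``all $\beta_k<1$'' for stability and ``some $\beta_k>1$'' for instability.

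This is exactly why the paper works on the smooth side of the vertex, differentiating $U_i$ at $p_i=1$ rather than probing $U_j$ at $p_j=0$. To make your approach reach the stated theorem you would have to either restrict to exponents all on one side of $1$ (where ``weighted average'' and ``all'' coincide), or argue that the sign of $\partial U_i/\partial p_i$ at $p_i=1$ --- not the limiting sign of $U_j-U_i$ --- is the operative local stability condition, which is precisely the step the paper takes as its starting point.
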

\begin{proof}
    We require that $U_i$ is an increasing function with respect to $p_i$ near $\bar{p}_i=1$. Taking the derivative we then have the condition:
    \begin{equation}
        \sum_{l=1}^m \kappa_l(X_{i,l}' - \beta_l X_{i,l}) > 0.
    \end{equation}
    where $\kappa_l = \alpha_l/(Y_{0,l}N^{\beta_l})$. In the scenario of proportional distribution ($X_{i,l} = \gamma_l$), this condition simplifies further to $\sum_{l=1}^m \tilde{\kappa}_l \beta_l < 1$ where $\tilde{\kappa}_l = \gamma_l\kappa_l/(\sum_{l=1}^m \gamma_l\kappa_l)$ (i.e\ the weighted average is sublinear).
\end{proof}

\begin{corollary} \label{thm:sublinear_lag}
    Let $\bar{p}_i = 1$ for some $i$, and thus $\bar{p}_j=0 \: \forall i \neq j$. Then, Theorem \ref{thm:sublinear} holds for the corresponding system of delay differential equations where utility is defined by Equation \ref{eqn:utility_lag}.
\end{corollary}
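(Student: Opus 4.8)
The plan is to follow the same route as the proof of Corollary~\ref{thm:superlinear_lag} --- linearize the delayed replicator dynamics (Equation~\ref{eq:replicator} with utility~\ref{eqn:utility_lag}) and show that no characteristic root can have positive real part --- while exploiting the fact that here the equilibrium sits at a vertex of the simplex, which makes the delay drop out of the linearization. First I would write $p_j = \epsilon_j$ for $j\neq k$ and $p_k = 1-\sum_{l\neq k}\epsilon_l$, and expand $\dot\epsilon_j = \epsilon_j\sum_{l\neq j}p_l\,r_{j,l}(U_j(t)-U_l(t))$ about the vertex $\bar p_k=1$. The delayed perturbations $\epsilon_m(t-\tau)$ enter only through the utility arguments $p(t-\tau)$, and every such occurrence is multiplied by the prefactor $\epsilon_j=\bar p_j=0$; hence at first order all $\tau$-dependent terms vanish and the linearization collapses to the \emph{same} constant-coefficient diagonal system as for the delay-free model, namely $\dot\epsilon_j = r_{j,k}\,(U_j-U_k)|_{e_k}\,\epsilon_j$ for $j\neq k$ --- in particular, unlike the interior equilibrium of Corollary~\ref{thm:superlinear_lag}, no coupling matrix $\mathbf M$ appears.

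Consequently the characteristic equation factors into $n-1$ scalar equations whose roots are exactly the transverse eigenvalues already controlling Theorem~\ref{thm:sublinear}: negative when $U_i$ is increasing at $\bar p_i=1$, i.e.\ when the relevant weighted average of the $\beta_l$ is sublinear, and of the opposite sign in the superlinear case. Since these roots do not involve $\tau$, the stability dichotomy of Theorem~\ref{thm:sublinear} carries over to the delay system unchanged. (Should one prefer to retain the $e^{-\lambda\tau}$ factor by bookkeeping the delayed allocation term $X_{j,l}(p_j(t-\tau))$ more carefully, one obtains a scalar equation $\lambda = r_{j,k}(\Psi_1 e^{-\lambda\tau}-\Psi_2)$ with $\Psi_1,\Psi_2>0$ and the no-delay stability condition reading $\Psi_1<\Psi_2$; the modulus estimate used in Corollary~\ref{thm:superlinear_lag} then closes it. At a vertex this refinement is unnecessary.)

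The hard part will be the non-smoothness of $U_i$ at $p_i=0$: under proportional allocation the ratio $X_{i,l}/Y_{i,l}$ scales like $p_i^{\,1-\beta_l}$, so $U_i(p_i)$ is not differentiable --- indeed it blows up --- at the vertex whenever some $\beta_l>1$, and the formal ``transverse eigenvalue'' is then $+\infty$. I would handle this exactly as the proof of Theorem~\ref{thm:sublinear} implicitly does, arguing directly from the growth rate $\dot\epsilon_j/\epsilon_j = r_{j,k}(U_j(p_j(t),p_j(t-\tau))-U_k)$: when the weighted exponent is sublinear, $U_j\to U_j(0^+)<U_k$ along any orbit approaching $e_k$, forcing $\epsilon_j\to0$ regardless of $\tau$; when it is superlinear, $U_j$ diverges and $\epsilon_j$ grows. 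One must then only check --- a routine computation --- that inserting the lag into $X_{i,j}$ leaves $U_j(0^+)$, and hence this dominant balance, unchanged, which gives Corollary~\ref{thm:sublinear_lag}.
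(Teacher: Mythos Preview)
Your proposal is correct and takes essentially the same route as the paper: both argue that at the vertex equilibrium the delayed contribution to the linearization vanishes (the paper states this as ``the coefficient of the term $e^{-\lambda\tau}$ in the linearization is zero'', while you trace it to the prefactor $p_j=\bar p_j+\epsilon_j=\epsilon_j$, which pushes every delayed-perturbation term to second order), so the characteristic equation reduces to that of the non-delay system and Theorem~\ref{thm:sublinear} carries over unchanged. Your treatment is more explicit --- writing out the diagonal transverse linearization and flagging the non-smoothness of $U_j$ at $p_j=0$, a genuine subtlety the paper's one-sentence proof does not address --- but the core mechanism is identical.
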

\begin{proof}
    Linearizing with respect to $p(t-\tau)$ and evaluating at the equilibrium conditions, the coefficient of the term $e^{-\lambda\tau}$ in the linearization is zero. Thus, the characteristic equation is identical to the non-delay equation, and the stability conditions remain unchanged from that of Theorem \ref{thm:sublinear}.
\end{proof}

\begin{lemma}
    Let allocations be proportional and $\beta_1>\beta_2$. Then, $\Theta \geq 1$ is a sufficient condition for $S'(\alpha_i) > 0$.
\end{lemma}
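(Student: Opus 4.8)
The plan is to exploit the fact that, under proportional allocation, the rare mutant's preference $\tilde{\alpha}_i$ enters the invader's utility \emph{linearly}, while the resident migration equilibrium is independent of $\tilde{\alpha}_i$; the selection gradient then collapses to a single difference of logarithms whose sign is governed directly by $\Theta$.

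First I would substitute the proportional allocation $X_{i,j} = \gamma_j p_i$ into the utility. The resource ratio simplifies to
\begin{equation}
    \frac{X_{i,j}}{Y_{0,j}(p_iN)^{\beta_j}} = \frac{\gamma_j}{Y_{0,j}N^{\beta_j}}\,p_i^{\,1-\beta_j},
\end{equation}
so, writing $a_j := \gamma_j/(Y_{0,j}N^{\beta_j})$ and letting $\bar{p}_i$ denote the equilibrium share of city $i$ under the residents' preferences, the invader's utility is $U_i(\tilde{\alpha}_i,\alpha_i) = \tilde{\alpha}_i\ln(1 + a_1\bar{p}_i^{\,1-\beta_1}) + (1-\tilde{\alpha}_i)\ln(1 + a_2\bar{p}_i^{\,1-\beta_2})$. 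By the rare-mutant assumption, $\bar{p}_i$ is pinned down entirely by the residents' preferences and does not depend on $\tilde{\alpha}_i$, which I would state explicitly since the next step rests on it.

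Since $U_i$ is affine in $\tilde{\alpha}_i$, differentiating $S(\tilde{\alpha}_i,\alpha_i) = U_i(\tilde{\alpha}_i,\alpha_i) - U_i(\alpha_i,\alpha_i)$ and evaluating at $\tilde{\alpha}_i = \alpha_i$ yields
\begin{equation}
    S'(\alpha_i) = \ln\!\left(1 + a_1\bar{p}_i^{\,1-\beta_1}\right) - \ln\!\left(1 + a_2\bar{p}_i^{\,1-\beta_2}\right).
\end{equation}
Because $\ln$ is increasing and $\bar{p}_i > 0$, this is positive iff $a_1\bar{p}_i^{\,1-\beta_1} > a_2\bar{p}_i^{\,1-\beta_2}$, i.e.\ iff $a_1/a_2 > \bar{p}_i^{\,\beta_1-\beta_2}$. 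Now $a_1/a_2 = \gamma_1Y_{0,2}N^{\beta_2}/(\gamma_2Y_{0,1}N^{\beta_1}) = \Theta$, so the condition reads $\Theta > \bar{p}_i^{\,\beta_1-\beta_2}$. Since $\bar{p}_i$ is the share of a non-empty city, $\bar{p}_i \in (0,1]$, and since $\beta_1 > \beta_2$ the exponent $\beta_1-\beta_2$ is positive, so $\bar{p}_i^{\,\beta_1-\beta_2} \le 1$. Hence $\Theta \ge 1$ implies $\Theta \ge \bar{p}_i^{\,\beta_1-\beta_2}$, with strict inequality whenever $\bar{p}_i < 1$, which is the claim. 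The same computation applies verbatim to either city, since $\bar{p}_i \in (0,1]$ in both cases.

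I do not anticipate a serious obstacle; the work is chiefly bookkeeping. The one point demanding care is the justification that the migration equilibrium $\bar{p}_i$ does not vary with the invader trait $\tilde{\alpha}_i$ --- this is the usual adaptive-dynamics timescale separation, but it must be spelled out because the reduction of $S'$ to a difference of logarithms depends on it. A minor caveat worth flagging rather than hiding is the degenerate boundary case $\Theta = 1$ with $\bar{p}_i = 1$ (a megacity resident equilibrium), where $S'(\alpha_i) = 0$ rather than strictly positive; for a strict conclusion one needs either $\Theta > 1$ or an interior equilibrium $\bar{p}_i \in (0,1)$.
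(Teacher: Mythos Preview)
Your argument is essentially identical to the paper's: compute the selection gradient as the difference of the two log terms, reduce positivity to $\Theta\,\bar{p}_i^{\,\beta_2-\beta_1}>1$ (equivalently your $\Theta>\bar{p}_i^{\,\beta_1-\beta_2}$), and conclude from $\bar{p}_i\in(0,1]$ with $\beta_1>\beta_2$. Your explicit flagging of the degenerate boundary $\Theta=1$, $\bar{p}_i=1$ is a small refinement over the paper, which tacitly takes $p<1$ when asserting $p^{\beta_2-\beta_1}>1$.
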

\begin{proof}
    The selection gradient is:
\begin{equation}
    S'(\alpha_i) = \frac{\partial S(\tilde{\alpha}_i,\alpha_i)}{\partial \tilde{\alpha}_i} \bigg|_{\tilde{\alpha}_i=\alpha_i} = \ln\!\left(1 + \frac{\gamma_1p^{1-\beta_1}}{Y_{0,1}N^\beta_1} \right) - \ln\!\left(1 + \frac{\gamma_2p^{1-\beta_2}}{Y_{0,2}N^\beta_2} \right).
\end{equation}
$S'(\alpha_i)$ is positive if:
\begin{equation}
    \ln\!\left(1 + \frac{\gamma_1p^{1-\beta_1}}{Y_{0,1}N^\beta_1} \right) > \ln\!\left(1 + \frac{\gamma_2p^{1-\beta_2}}{Y_{0,2}N^\beta_2} \right)
    \implies \frac{\gamma_1p^{1-\beta_1}}{Y_{0,1}N^\beta_1} > \frac{\gamma_2 p^{1-\beta_2}}{Y_{0,2}N^\beta_2}
    \implies \Theta p^{\beta_2-\beta_1} > 1.
\end{equation}
Since $\beta_2-\beta_1 < 0 \implies p^{\beta_2-\beta_1} > 1$, $\Theta \geq 1$ is a sufficient condition for $S'(\alpha_i) > 0$.
\end{proof}

\section{Simulation methods and Nash social welfare allocations} \label{app:methods}

Here we detail the simulation methods and present supplementary results for a scenario in which allocations are determined by a legislature maximizing the Nash social welfare function.

All simulations were implemented in Julia and use the DifferentialEquations package \citep{rackauckas17}. The setup consists of $10^2$ cities arranged in a square grid. Three resources are considered with weights $\alpha_j=1/3$ and constant baselines $Y_{0,j}=1$ for all $j$. Utility is defined by Equation \ref{eq:utility}, and the total population size is normalized to $N=1$. $r_{ij}$ is the square of the Euclidean distance between cities $i$ and $j$ on the grid.

The Interior Point OPTimizer (Ipopt) \citep{wachter06} and the JuMP packages \citep{lubin23} are used to compute the allocations that maximize Nash social welfare assuming a total budget $B=1$. Note that the budget can be either divided into fixed sub-budgets for each allocation, $B_j$, or allocations can be drawn from a single common budget. In the former scenario, $\sum_{i=1}^n X_{i,j} = B_j$ and $\sum_{j=1}^m = B$. In the latter, $\sum_{i=1}^n\sum_{j=1}^m X_{i,j} = B$. Here we depict results for the latter scenario, though both results are similar (and both are implemented in the code). Additionally, we also assume that there is no local funding of allocations: they are entirely determined by the central government's allocations.

\begin{figure}[ht!]
\centering
\includegraphics[width=\columnwidth]{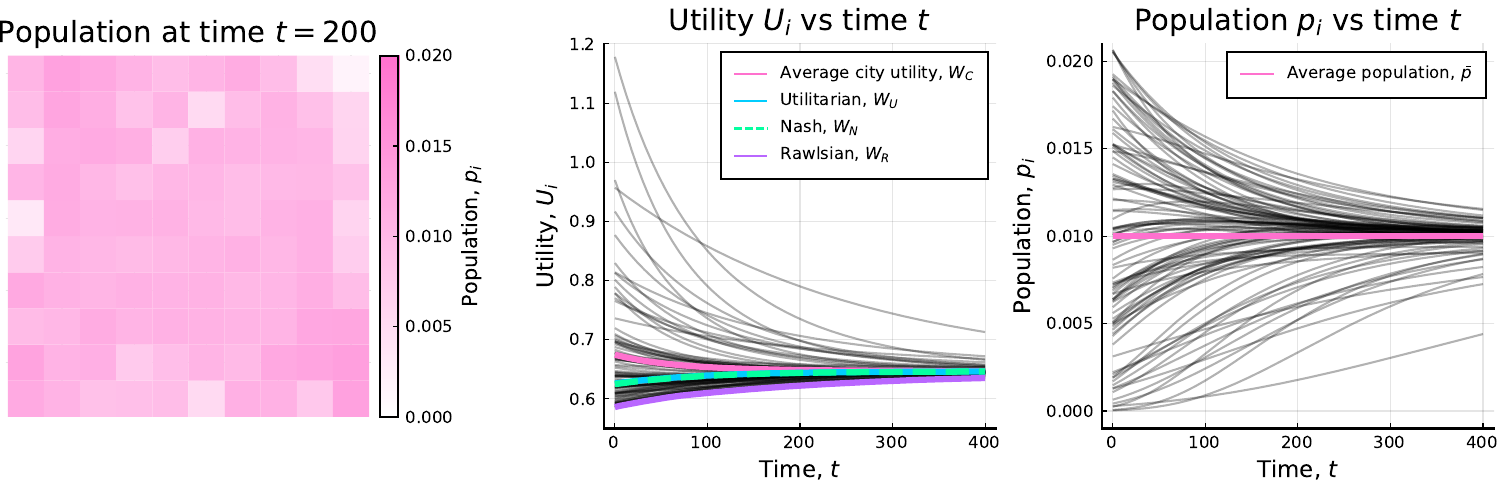}
\caption{Results for scaling exponents $\beta_j=1.1,1.2,1.3$ with $\alpha_j=1/3$. In the long-run, cities converge on the same size.}
\label{fig:Nash_superlinear}
\end{figure}

\begin{figure}[ht!]
\centering
\includegraphics[width=\columnwidth]{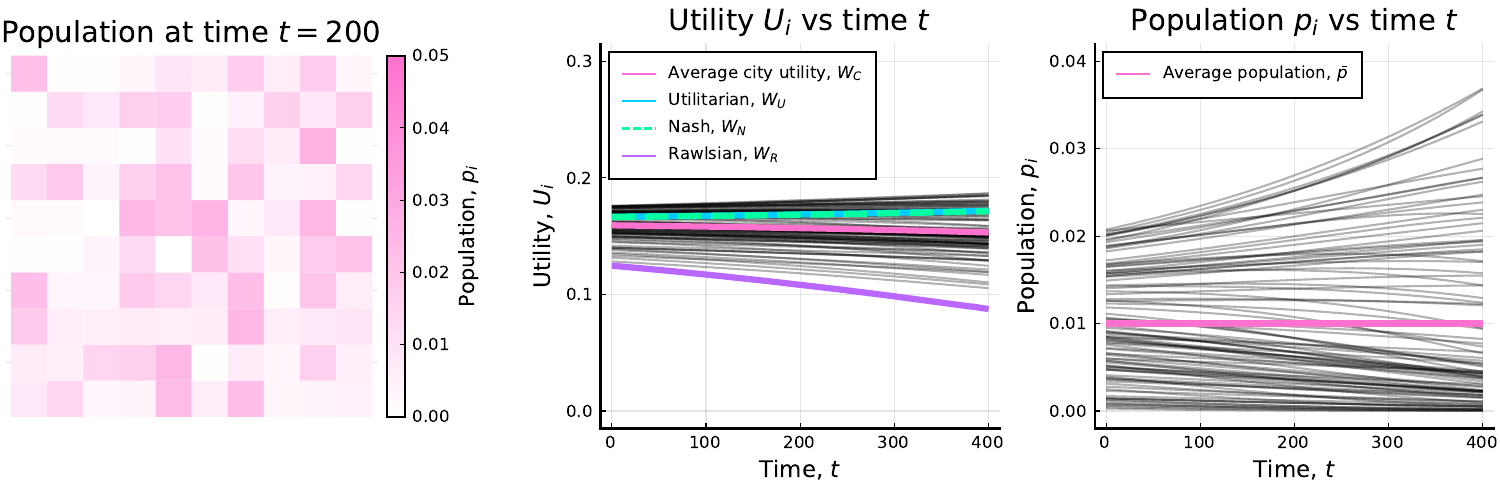}
\caption{Results for scaling exponents $\beta_j=0.7,0.9,1.3$ with $\alpha_j=1/3$. In the long-run, a single megacity dominates. However, multiple cities of varying sizes can coexist during transitional periods.}
\label{fig:Nash_sublinear}
\end{figure}

Figures \ref{fig:Nash_superlinear} and \ref{fig:Nash_sublinear} illustrate the heatmaps and times series for this case. Note that these results are qualitatively similar to those in the main text (Figures \ref{fig:superlinear} and \ref{fig:sublinear}). Superlinear scaling leads to cities converging on the same size and utility. Although, as shown in the proofs in the previous section, sublinear scaling ultimately results in a single megacity. Figure \ref{fig:Nash_sublinear} shows that this convergence can be very slow, resulting in a prolonged transitional period with cities of varying sizes. 

\end{document}